\documentclass[letterpaper, 10 pt, conference]{ieeeconf} 

\usepackage{graphicx}
\usepackage{cite}
\usepackage{amsmath}
\usepackage{amssymb,amsfonts}
\usepackage{textcomp}
\usepackage{xcolor}
\usepackage{subfigure}
\usepackage{enumerate}
\usepackage[hidelinks]{hyperref}

\usepackage{flushend}
\usepackage{pgf}
\usepackage{tikz}
\usetikzlibrary{positioning, arrows.meta,automata}
\usetikzlibrary{calc}
\usepackage[latin1]{inputenc}

\usepackage[ruled,vlined,linesnumbered, procnumbered]{algorithm2e}

\newtheorem{theorem}{\bf Theorem}

\newtheorem{definition}{\bf Definition}
\newtheorem{proposition}{\bf Proposition}
\newtheorem{problem}{\bf Problem}

\IEEEoverridecommandlockouts                              

\title{\bf A Stackelberg Game Approach for Signal Temporal Logic \\ Control Synthesis with Uncontrollable Agents}

\author{Bohan Cui, Xinyi Yu, Alessandro Giua and Xiang Yin%
\thanks{This work was supported by  the National Natural Science Foundation of China (62061136004, 62173226, 61833012).}
	\thanks{Bohan Cui and Xiang Yin are with the Department of Automation and Key Laboratory of System Control and Information Processing, Shanghai Jiao Tong University, Shanghai 200240, China.
	{\tt\small \{bohan\_cui,  yinxiang\}@sjtu.edu.cn}. 
	Xinyi Yu is with Thomas Lord Department of Computer Science, University of Southern California, Los Angeles, CA 90089, USA.
	{\tt\small xinyi.yu12@usc.edu}.     
	Alessandro Giua is with the Department of Electrical and Electronic Engineering, 
    University of Cagliari, Cagliari 09123, Italy.
	{\tt\small giua@unica.it}.
	} 
}

\begin{document}

\maketitle
\thispagestyle{empty}
\pagestyle{empty}
\setlength{\abovecaptionskip}{0pt}
\setlength{\belowcaptionskip}{3pt}
\setlength{\textfloatsep}{6pt}

\begin{abstract}
In this paper, we investigate the control synthesis problem for Signal Temporal Logic (STL) specifications in the presence of uncontrollable agents. Existing works mainly address this problem in a robust control setting by assuming the uncontrollable agents are adversarial and accounting for the worst-case scenario. While this approach ensures safety, it can be overly conservative in scenarios where uncontrollable agents have their own objectives that are not entirely opposed to the system's goals. Motivated by this limitation, we propose a new framework for STL control synthesis within the \emph{Stackelberg game} setting. Specifically, we assume that the system controller, acting as the leader, first commits to a plan, after which the uncontrollable agents, acting as followers, take a best response based on the committed plan and their own objectives. Our goal is to synthesize a control sequence for the leader such that, for any rational followers producing a best response, the leader's STL task is guaranteed to be satisfied. We present an effective solution to this problem by transforming it into a single-stage optimization problem and leveraging counter-example guided synthesis techniques. We demonstrate that the proposed approach is sound and identify conditions under which it is optimal. Simulation results are also provided to illustrate the effectiveness  of the proposed framework.
\end{abstract}

\section{Introduction}

Decision making and task planning are fundamental challenges in the design of autonomous systems. In recent years, formal methods have gained significant attention for high-level task planning due to their ability to provide rigorous specifications and guarantees for system correctness  \cite{kress2018synthesis,belta2019formal,yin2024formal}.  Among these methods, temporal logic has emerged as a particularly expressive tool, enabling the description of complex temporal and spatial constraints in a mathematically rigorous yet user-friendly manner.  Particularly, Signal Temporal Logic (STL), initially introduced in \cite{maler2004monitoring},  has been extensively studied as it is well-suited for capturing spatial-temporal requirements  real-valued signals in dynamic and uncertain environments \cite{ma2020sastl, ma2020stlnet, silano2021power}.

In the context of control synthesis for STL specifications, one of the most widely used methods is the optimization-based approach \cite{raman2014model, kurtz2022mixed}. This approach encodes the satisfaction of STL formulae using binary variables and transforms the control synthesis problem into a Mixed Integer Linear Programming (MILP) problem. An alternative approach leverages control barrier functions, which provide sufficient conditions to characterize forward invariant sets corresponding to the satisfaction regions of STL formulae \cite{lindemann2018control, xiao2021high}. These methods primarily address the open-loop control problem, where the goal is to synthesize a single control sequence that satisfies the STL specifications without external disturbances.

When external disturbances or uncertainties are present, feedback control based on online information becomes essential. In the context of STL control synthesis, a direct approach is to integrate open-loop trajectory synthesis methods into the framework of receding horizon control; see, e.g.,  \cite{raman2015reactive, lindemann2021reactive, gundana2021event, ilyes2023stochastic, scher2023ensuring, farahani2018shrinking, sadigh2016safe}. This involves solving the optimization problem iteratively in real-time, where only the latest control input is applied at each step. 
Particularly, to ensure the feasibility of recursive computation, it is necessary to solve a robust optimization problem that guarantees the satisfaction of the STL task under all possible disturbances \cite{farahani2018shrinking, ren2022reachability, yu2023model, vlahakis2024probabilistic}. 

The above-mentioned existing works on STL control synthesis in the presence of external inputs essentially fall into the category of \emph{zero-sum games}. In this framework, the controller aims to ensure the satisfaction of the STL formula under \emph{all possible} external inputs, which can be interpreted as a second player whose objective is entirely opposed to the system's STL task. This zero-sum setting is suitable when external inputs are generated by disturbances or adversarial agents. However, in some scenarios, this approach can be overly conservative. For example, consider a two-agent control synthesis problem where only Agent 1 is controllable, and the satisfaction of the STL formula for Agent 1 depends on the joint trajectory of both agents. 
In the zero-sum setting, the controller must adopt a conservative plan to guarantee task satisfaction regardless of Agent 2's actions. In practice, however, Agent 2 may have its own objective, which could be described by another STL formula that is not entirely opposed to Agent 1's goal. In such cases, synthesizing a plan for Agent 1 should take into account the rationality of Agent 2 to reduce conservatism. This essentially requires to consider \emph{non-zero sum game}  setting, where each player has its own distinct objective.

Motivated by the above considerations, this paper proposes a new perspective for STL control synthesis in the presence of uncontrollable agents within the framework of Stackelberg game or leader-follower game. In this setting, the system's inputs are divided into two parts: 
controlled inputs and uncontrollable inputs. Unlike the traditional robust control or zero-sum game approaches, here we assume that both the controller (leader) and the uncontrollable agent (follower) have their own objectives, each specified by separate STL formulae that are not necessarily opposed to one another. Specifically, the leader first commits to a sequence of control inputs, after which the follower determines a best-response control sequence that optimizes its own STL task. Our objective is to synthesize a control sequence for the leader such that, for any rational follower producing a best response, the leader's STL task is guaranteed to be satisfied. We show that this problem can yield two types of solutions: cooperative solutions, where the leader and follower align their objectives to achieve mutual benefit, and antagonistic solutions, where  the leader forces the follower to be non-interfering by eliminating the possibility of the follower achieving its STL task. For each type of solution, we propose an effective approach to synthesize optimal control sequences for the leader. Finally, we provide two case studies on robot path planning to illustrate the effectiveness  of our proposed framework.

\textbf{Related Works: }
Our work is closely related to hierarchical games within the context of dynamic game theory \cite{bacsar1998dynamic, zhu2011stackelberg} and control synthesis involving uncontrollable agents subject to temporal logic tasks \cite{ulusoy2014receding, hoxha2016planning, li2021safe}. Particularly, some existing works have already explored the integration of these two areas. For instance, \cite{niu2020optimal} addresses the minimum violation problem with numeric utilities for stochastic systems within the framework of Stackelberg games, where  Pareto-optimal payoff is used to determine the best response of the follower. In \cite{li2022dynamic}, the authors investigate the use of deception in strategic planning for adversarial environments by introducing a hierarchical hyper-game model. This model enables the leader to develop deceptive strategies to influence the follower's perception, thereby maximizing the likelihood of achieving its own objective. Additionally, \cite{cui2023towards} studies supervisory control of discrete-event systems within the Stackelberg game framework. However, these works primarily focus on either simple safety specifications or linear temporal logic specifications. To the best of our knowledge, the application of Stackelberg games to STL control synthesis remains unexplored.


The remainder of this paper is organized as follows. Section~\ref{sec-pre} presents some necessary preliminaries. 
In Section~\ref{sec-pro}, we introduce the 
framework of the leader-follower game  and formulate the Stackelberg STL synthesis problem (SSP-STL). 
Section~\ref{sec-two} presents a two-stage synthesis procedure for finding the cooperative and antagonistic solutions to SSP-STL.
In Section~\ref{sec-case}, we conduct two case studies to illustrate the application of our approach. Finally, we conclude the paper and discuss future directions in Section~\ref{sec-conclusion}.

\section{Preliminaries}\label{sec-pre}
\subsection{System Model}
We consider  discrete-time control systems operating in open, interactive environments where external disturbances or adversarial agents are present. In such cases, the state of the system may be influenced by both the system controller and the \emph{external input}. Consequently, the system dynamics are of the following form
\begin{align}\label{sys-dis}
    x_{t+1}=f(x_t,u_t,w_t), 
\end{align}
where 
$t=0,1,\dots$ are the time indices, 
$x_t\in X \subseteq \mathbb{R}^n$ is the system state at time $t$, $u_t\in U\subseteq  \mathbb{R}^m$ denotes the bounded input of the system controller at time $t$
and $w_t\in W \subseteq \mathbb{R}^e$ denotes the external input. 
We assume that the initial state is fixed as $x_0\in X$. 
Given a sequence of control inputs 
$\mathbf{u}_{0:N-1}=u_0 u_{1}\dots u_{N-1}\in U^{N}$ and a sequence of external inputs $\mathbf{w}_{0:N-1}=w_0 w_{1}\dots w_{N-1}\in U^{N}$ with horizon $N$, 
the resulting state sequence of the system is $\xi_f(x_0,\mathbf{u}_{0:N-1},\mathbf{w}_{0:N-1})=\mathbf{x}_{0:N}=x_0 x_1\dots x_N\in X^{N+1}$, where for all $ i=0,1,\dots,N-1$ we have $x_{i+1}=f(x_i,u_i,w_i)$. 
Hereafter, we will drop the subscript $0:N-1$ when the horizon $N$ is clear from the context.

\subsection{Signal Temporal Logic Specifications}
We consider formal specifications described by signal temporal logic (STL) formulae, whose syntax  is  as follows
\[
\phi::= \top \mid \pi^\mu\mid \lnot\phi \mid  \phi_1\land \phi_2\mid \phi_1 \mathbf{U}_{[a,b]} \phi_2,
\]
where 
$\top$ is the true predicate, 
$\pi^\mu$ is an atomic predicate whose truth value is determined by the sign of its underlying predicate function $\mu: \mathbb{R}^n\to \mathbb{R}$, i.e., $\pi^\mu$ is true at time $t$ if and only if  $\mu(x_t)\geq 0$. Notations $\neg$ and $\land$ are the standard Boolean operators ``negation" and ``conjunction", respectively, and 
$\mathbf{U}_{[a,b]}$ is the temporal operator ``until", where $a,b\in \mathbb{N}$ and $a\leq b$.
Based on the basic notations given above, one can further induce Boolean operators ``disjunction" by $\phi_1\vee\phi_2:=\neg(\neg\phi_1\land\neg\phi_2)$ and ``implication" by $\phi_1\to\phi_2:=\neg\phi_1\vee\phi_2$, and temporal operators ``eventually" by $\mathbf{F}_{[a,b]}\phi:= \top \, \mathbf{U}_{[a,b]} \phi$ and ``always" by $\mathbf{G}_{[a,b]}\phi:=\lnot\mathbf{F}_{[a,b]}\lnot\phi$.

Given a state sequence $\mathbf{x}$, 
we denote by $(\mathbf{x},t)\models \phi$ the satisfaction of STL formula $\phi$ at time $t$. Formally, the Boolean semantics of STL  are  defined as recursively follows:
\begin{align}
    & (\mathbf{x},t)\models \pi^\mu &\Leftrightarrow \quad &\mu(x_t)\geq 0 \nonumber \\
    & (\mathbf{x},t)\models\neg \phi &\Leftrightarrow \quad &\neg((\mathbf{x},t)\models \phi) \nonumber \\
    & (\mathbf{x},t)\models \phi_1\land \phi_2 &\Leftrightarrow \quad & (\mathbf{x},t)\models \phi_1 \land (\mathbf{x},t)\models \phi_2 \nonumber \\
    & (\mathbf{x},t)\models \phi_1 \mathbf{U}_{[a,b]} \phi_2 &\Leftrightarrow \quad & \exists t'\in [t+a,t+b]\!:\! (\mathbf{x},t')\models \phi_2 \nonumber \\
    &&&\text{and } \forall t''\in [t,t']: (\mathbf{x},t'')\models \phi_1 \nonumber
\end{align}
We write $\mathbf{x}\models \phi$ instead $(\mathbf{x},0)\models \phi$ for simplicity.


In this paper, we consider bounded-time STL formulae, in which time intervals $[a,b]$ are bounded. This horizon determines the length of the sequence required to evaluate the satisfaction of $\phi$.
Although this work mainly addresses the Boolean semantics, in some cases, it is useful to further consider the quantitative semantics, where the robustness value $\rho^\phi (\mathbf{x}) \in \mathbb{R}$ is used to quantify the degree to which $\phi$ is satisfied. 
The reader is referred to \cite{donze2010robust} for more details for the computation of the robustness value.

\section{Problem Formulation}\label{sec-pro}

\subsection{Robust STL Control Synthesis with Disturbances}
In the literature, the problem of \emph{robust STL control synthesis} has been extensively studied. This problem focuses on designing an input sequence from the perspective of the controller that ensures the satisfaction of the STL formula \emph{for all possible} external input sequences.
Formally, this problem can be described by the following optimization problem
\begin{align}
    & \text{minimize}_{\mathbf{u}}\max_{\mathbf{w}} J(x_0, \mathbf{u},\mathbf{w}) \nonumber \\
    &\text{subject to} \quad  \forall \mathbf{w}\in W^N: 
    \xi_f(x_0,\mathbf{u},\mathbf{w})\models\phi, \nonumber
\end{align}
where  $J: X\times U^N \times W^N \to \mathbb{R}$ is a generic cost function such as the robust value of the resulting state sequence.

The robust control synthesis problem described above essentially adopts a \emph{zero-sum game} framework, assuming that the external inputs are entirely adversarial. This assumption is appropriate in scenarios where the external inputs represent disturbances or noise, which are unintentional and lack strategic intent. However, in many practical applications, the external inputs are determined by another \emph{rational agent} that has its own objectives, which may not be entirely opposed to those of the system controller. In such cases, the robust control synthesis formulation may become overly conservative, as it fails to account for the rationality and potential cooperation of the external agent. 

\subsection{Stackelberg Game Formulation for STL Control Synthesis}
In this paper, to better capture the rationality of the external uncontrollable agents, we introduce a new formulation for STL control synthesis within the framework of Stackelberg games (or leader-follower games). Specifically, we assume that the overall input at each time step is jointly determined by two agents, referred to as the \emph{leader} and the \emph{follower}. The system dynamics are formalized as follows:
\begin{align}\label{sys-stack}
    x_{t+1} = f(x_t, u^L_t, u^F_t),
\end{align}
where 
$u^L_t \in U^L \subseteq \mathbb{R}^{m_L}$ and 
$u^F_t \in U^F \subseteq \mathbb{R}^{m_F}$ represent the leader's and the follower's control inputs at time $t$, respectively. 
Similarly, given the initial state $x_0\in X$, and  two sequences of control inputs $\mathbf{u}^L=u^L_0 u^L_{1}\dots u^L_{N-1}\!\!\in \!\!(U^L)^{N}$ and $\mathbf{u}^F=u^F_0 u^F_{1}\dots u^F_{N-1}\!\!\in \!\!(U^F)^{N}$, 
the resulting state sequence is $\xi_f(x_0,\mathbf{u}^L,\mathbf{u}^F)=\mathbf{x}=x_0 x_1\dots x_N\in X^{N+1}$, where for each $ i\leq N-1$, we have $x_{i+1}=f(x_i,u^L_i,u^F_i)$. 

We assume that both the leader and the follower have their own objectives, represented by STL formulae $\phi^L$ and $\phi^F$, respectively. Note that these two formulae are not necessarily in opposition to each other. Furthermore, we assume that the leader has knowledge of the follower's objective formula $\phi^F$. 
Within the framework of the Stackelberg game, the decision-making process of the two agents proceeds as follows: 
\begin{itemize}
  \item 
  First, the leader selects its control input sequence $\mathbf{u}^L$ based on its objective $\phi^L$ and its knowledge of the follower's objective $\phi^F$.  Once the input sequence of the leader is synthesized, the leader commits to executing this sequence. This decision is then published as public information, making it available to the follower.  
  \item  
  Then, the follower determines its own input sequence $\mathbf{u}^F$, optimizing its objective $\phi^F$ based on the leader's committed actions. 
\end{itemize}
This information structure enables the leader to to \emph{anticipate} the follower's reaction and aims to optimize its own objective while considering the follower's likely response. The follower's decisions are then made with full knowledge of the leader's strategy, reflecting the hierarchical nature of the Stackelberg game framework.

Here, we assume that the behavior of the follower is \emph{rational} in the sense that it will not deviate from its best response.  This leads to the following definitions.  
\begin{definition}[Successful Responses]
Suppose that the leader's input sequence is committed to be  $\mathbf{u}^L$. 
We say  follower's input sequence $\mathbf{u}^F$ is a \emph{successful response} (w.r.t.\ to $\mathbf{u}^L$ and $\phi^F$) if 
\[
\xi_f(x_0,\mathbf{u}^L,\mathbf{u}^F)\models\phi^F.
\]
We denote by $\mathbf{SR}(\mathbf{u}^L,{\phi^F})$ the set of successful responses.
\end{definition}

Note that there may be situations where there are no successful response available to the follower, i.e., $\mathbf{SR}(\mathbf{u}^L,\phi^F)=\emptyset$. 
In such cases, for the sake of simplicity, we assume that the follower will choose to be \emph{non-interfering} in the sense that  its input will be consistently zero.  
Our approach can be extended to the case of any pre-defined inputs when the follower cannot achieve its own objective. 

\begin{definition}[Best Responses]\label{def-best-res}
Suppose that the leader's input sequence is committed to be  $\mathbf{u}^L$. 
We say  follower's input sequence $\mathbf{u}^F$ is a \emph{best response} (w.r.t.\ to $\mathbf{u}^L$ and $\phi^F$) if 
(i) it is a successful response when $\mathbf{SR}(\mathbf{u}^L,{\phi^F})\neq \emptyset$; 
or 
(ii) it is the zero sequence $\mathbf{0}=(0^{m_F})^N$ when $\mathbf{SR}(\mathbf{u}^L,{\phi^F})= \emptyset$. 
We denote by $\mathbf{BR}(\mathbf{u}^L,\phi^F)$ the set of best responses.
\end{definition}

Similar to the robust STL synthesis problem, we also consider a generic cost function 
$J_S: X^N \times (U^L)^N \times (U^F)^N \to \mathbb{R}$, whose value is jointly determined by both the leader's and the follower's control inputs. The control objective is to design an input sequence $\mathbf{u}^L$ from the leader's perspective such that the system trajectory satisfies $\phi^L$ under any best response of the follower, while minimizing the worst-case cost. This formulation leads to the following Stackelberg STL Synthesis Problem (SSP-STL):

\begin{problem}[SSP-STL]\label{SSP-STL}
Given the system in Equation~\eqref{sys-stack}, the cost function $J_S$, the initial state $x_0$, the leader's specification $\phi^L$, and the follower's specification $\phi^F$, synthesize an optimal input sequence for the leader, $\mathbf{u}^L$, to minimize the cost function subject to the constraint that the follower takes a best response. Formally, we have
\[
\begin{aligned}
& \underset{\mathbf{u}^L}{\text{minimize}} & & \max_{\mathbf{u}^F \in \mathbf{BR}(\mathbf{u}^L, \phi^F)} J_S(\mathbf{x}, \mathbf{u}^L, \mathbf{u}^F) \\
& \text{subject to} & & \mathbf{x} \models \phi^L \quad \text{for all} \quad \mathbf{u}^F \in \mathbf{BR}(\mathbf{u}^L, \phi^F), \\
& & & x_{t+1} = f(x_t, u^L_t, u^F_t), \quad t = 0, 1, \dots, N-1.
\end{aligned}
\]\medskip
\end{problem}

Compared to the robust STL control synthesis problem, the main challenge in the Stackelberg control synthesis problem is that the set of best responses cannot be determined by a single STL formula alone. Instead, the set of best responses must be constructed simultaneously while synthesizing $\mathbf{u}^L$. This results in a two-stage optimization problem where the leader's and follower's decisions are inherently coupled.
Particularly, the solution to the SSP-STL can be categorized into two types:  
\begin{itemize}  
\item 
Cooperative solution: 
The leader synthesizes an input sequence $\mathbf{u}^L$ that enables the follower to achieve its own objective, i.e., $\mathbf{SR}(\mathbf{u}^L, \phi^F)\neq \emptyset$;  
\item 
Antagonistic solution: The leader synthesizes an input sequence $\mathbf{u}^L$ that forces the follower to be non-interfering, i.e., $\mathbf{SR}(\mathbf{u}^L, \phi^F)=\emptyset$.  
\end{itemize}

\section{Synthesis of Solutions: Two-Stage Approach}\label{sec-two}

In this section, we present solutions to the SSP-STL for both cooperative and antagonistic cases. Our approach is primarily designed for the cooperative case, which is more challenging due to the need to align the objectives of the leader and the follower. We then demonstrate that the solution approach for the cooperative case can be easily adapted to handle the antagonistic case, where the leader aims to enforce the unsatisfiability of the follower's tasks.

\subsection{Synthesis of Cooperative Solution}
We first seek to find a cooperative solution, i.e., to solve the following optimization problem
\begin{subequations}\label{equation-5}
\begin{align}
& \underset{\mathbf{u}^L}{\text{minimize}} & & \max_{\mathbf{u}^F \in \mathbf{SR}(\mathbf{u}^L, \phi^F)} J_S(\mathbf{x}, \mathbf{u}^L, \mathbf{u}^F) \\
& \text{subject to} & & \mathbf{SR}(\mathbf{u}^L, \phi^F) \ne \emptyset, \label{cons-5b} \\
& & & \mathbf{x} \models \phi^L \quad \text{for all} \quad \mathbf{u}^F \in \mathbf{SR}(\mathbf{u}^L, \phi^F), \label{cons-5c} \\
& & &x_{t+1} = f(x_t, u^L_t, u^F_t), \quad t = 0, 1, \dots, N-1.\nonumber
\end{align}  
\end{subequations}

To solve the above optimization problem, motivated by the standard optimization-based approach for STL control synthesis \cite{raman2014model, kurtz2022mixed}, 
we first encode the satisfaction status of the two STL formulae using binary variables $z^L,z^F\in \{0,1\}$ such that
(i) $z^L=1 \Leftrightarrow \mathbf{x}\models \phi^L$, and
(ii) $z^F=1 \Leftrightarrow \mathbf{x}\models \phi^F$.  
Specifically,  the values of  $z^L$ and $z^F$ are determined recursively as follows.  
For each predicate $\mu$ and each time instant $t=0,1,...,N$, one needs to  introduce a binary variable  $z^\mu_t\in \{0,1\}$ such that
\begin{subequations}
    \begin{align}
        & \mu(x_t)\leq M z^\mu_t-\epsilon\nonumber\\
        -& \mu(x_t)\leq M (1-z^\mu_t)-\epsilon, \nonumber
    \end{align}
\end{subequations}
where $M$ and $\epsilon$ are sufficiently large positive numbers and sufficiently small positive numbers, respectively.  These two constraints enforce that $z^\mu_t=1$ if and only if $\mu(\mathbf{x}_t)>0$. 
Then for each sub-formula  of $\phi^L$ or $\phi^F$, one needs to introduce additional binary variables to encode its satisfaction, along with new constraints defined by the Boolean and temporal operations. 
For the sake of simplicity, we omit the intermediate variables and constraints in this discussion; readers are referred to \cite{raman2014model} for the  detailed  encoding process.
Instead, 
we also denote $z^L(x_0,\mathbf{u}^L,\mathbf{u}^F)$ and $z^F(x_0,\mathbf{u}^L,\mathbf{u}^F)$ as the value of binary variables $z^L$ and $z^F$ upon state sequence $\xi_f(x_0,\mathbf{u}^L,\mathbf{u}^F)$.

To handle the \emph{inner} maximization of $J_S$, we introduce an auxiliary variable $k\in \mathbb{R}$ and rewrite Equation~(\ref{equation-5}) as follows\vspace{-15pt}
\begin{subequations}\label{equation-6}
\begin{align}
& \underset{k,\mathbf{u}^L}{\text{minimize}} & & k \\
& \text{subject to} & & z^F(x_0,\mathbf{u}^L,\mathbf{u}^F)=1, \label{cons-6b} \\
& & & \!\!\!\!\! 
J_S(\mathbf{x}', \mathbf{u}^L, \mathbf{u}'^F) \leq k,\forall \mathbf{u}'^F \!\in \mathbf{SR}(\mathbf{u}^L, \phi^F), \label{cons-6c} \\
& & & 
\!\!\!\!\!z^L(x_0,\mathbf{u}^L,\mathbf{u}'^F)=1, \forall \mathbf{u}'^F \!\in \mathbf{SR}(\mathbf{u}^L, \phi^F), \label{cons-6d} \\
& & & x'_0=x_0, \nonumber \\
& & & x_{t+1} = f(x_t, u^L_t, u^F_t), \quad t = 0, 1, \dots, N-1, \nonumber \\
& & & x'_{t+1} = f(x'_t, u^L_t, u'^F_t), \quad t = 0, 1, \dots, N-1. \nonumber
\end{align}  
\end{subequations}

Intuitively, constraint~(\ref{cons-6b}) encodes the requirement in constraint~(\ref{cons-5b}) that $\mathbf{SR}(\mathbf{u}^L, \phi^F) \ne \emptyset$. Constraints~(\ref{cons-6c}) require $k$ to be the upper bound of the inner maximization problem, while constraints~(\ref{cons-6d}) encode constraints~(\ref{cons-5c}). 
By introducing these constraints, we reduce the bi-level optimization problem in Equation~(\ref{equation-5}) to single-level. Formally, we have the following result.
\begin{proposition}
The optimization problems Equation~(\ref{equation-5}) and Equation~(\ref{equation-6}) are equivalent.
\end{proposition}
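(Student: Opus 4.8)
The plan is to prove that the two problems share the same set of feasible leader inputs $\mathbf{u}^L$ and that, for each such input, they attain the same objective value; equivalence of the optimal values and of the minimizing $\mathbf{u}^L$ then follows immediately. Throughout I would rely on the correctness of the MILP encoding recalled above, namely that $z^L(x_0,\mathbf{u}^L,\mathbf{u}^F)=1$ iff $\mathbf{x}\models\phi^L$ and $z^F(x_0,\mathbf{u}^L,\mathbf{u}^F)=1$ iff $\mathbf{x}\models\phi^F$ along the trajectory $\mathbf{x}=\xi_f(x_0,\mathbf{u}^L,\mathbf{u}^F)$, which I take as given from~\cite{raman2014model}.

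First I would match the constraints one by one. In~(\ref{equation-6}) the unprimed follower input $\mathbf{u}^F$ must be understood as an existentially quantified witness variable, even though the displayed minimization lists only $k$ and $\mathbf{u}^L$; constraint~(\ref{cons-6b}) then asserts the existence of a follower input whose trajectory satisfies $\phi^F$, which by the encoding is exactly the non-emptiness condition~(\ref{cons-5b}), $\mathbf{SR}(\mathbf{u}^L,\phi^F)\neq\emptyset$. The primed variables $\mathbf{u}'^F$ and the induced trajectory $\mathbf{x}'$ carry the universal quantification over successful responses: constraint~(\ref{cons-6d}) reads $\mathbf{x}'\models\phi^L$ for every $\mathbf{u}'^F\in\mathbf{SR}(\mathbf{u}^L,\phi^F)$, which is precisely~(\ref{cons-5c}). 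Hence $\mathbf{u}^L$ is feasible for~(\ref{equation-5}) if and only if it can be extended, by some witness $\mathbf{u}^F$ and some scalar $k$, to a feasible point of~(\ref{equation-6}).

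Next I would match the objectives. Fixing a feasible $\mathbf{u}^L$, constraint~(\ref{cons-6c}) imposes $k\geq J_S(\mathbf{x}',\mathbf{u}^L,\mathbf{u}'^F)$ for all $\mathbf{u}'^F\in\mathbf{SR}(\mathbf{u}^L,\phi^F)$, so the least feasible $k$ equals $\max_{\mathbf{u}'^F\in\mathbf{SR}(\mathbf{u}^L,\phi^F)} J_S(\mathbf{x}',\mathbf{u}^L,\mathbf{u}'^F)$, which is exactly the inner maximum appearing in the objective of~(\ref{equation-5}); this is the standard epigraph reformulation of a min-max. Minimizing the outer variable $\mathbf{u}^L$ therefore yields identical optimal values, and any optimal $\mathbf{u}^L$ for one problem is optimal for the other.

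The step I expect to be most delicate is the treatment of the two copies of the follower input. I must argue that the single witness $\mathbf{u}^F$ used in~(\ref{cons-6b}) captures non-emptiness without coupling to the family~(\ref{cons-6c})--(\ref{cons-6d}), which instead ranges over the entire set $\mathbf{SR}(\mathbf{u}^L,\phi^F)$ through the independent variable $\mathbf{u}'^F$; conflating these two roles would either strengthen or weaken the feasible region. A secondary technical point is attainment of the inner maximum when $\mathbf{SR}(\mathbf{u}^L,\phi^F)$ fails to be compact: there the maximum should be read as a supremum, and I would invoke continuity of $J_S$ together with closedness and boundedness of $U^F$ to ensure the supremum is achieved, so that minimizing $k$ genuinely recovers it.
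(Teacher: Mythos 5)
Your proof is correct and follows essentially the same route as the paper's: the constraint-by-constraint identification ((\ref{cons-6b}) with (\ref{cons-5b}), (\ref{cons-6d}) with (\ref{cons-5c}) via the MILP encoding) combined with the standard epigraph reformulation of the inner maximum through the slack variable $k$ and constraint (\ref{cons-6c}). You are in fact more explicit than the paper on two points it leaves implicit --- the role of the unprimed $\mathbf{u}^F$ as an existentially quantified witness for non-emptiness, and attainment of the inner supremum --- but these are refinements of the same argument, not a different one.
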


\begin{proof}
    To prove this proposition, we note that the inner optimization problem is essentially an uncertainty of the objective function. To solve this kind of problem, the introduction of slack variable $k$ and the constraint (\ref{cons-6c}) has been completely investigated and become a standard method \cite{sim2004robust}. 
    Moreover, by encoding STL formulae $\phi^L$ and $\phi^F$, constraint \ref{cons-6b} is equavilent to \ref{cons-5b}, and constraints \ref{cons-6d} is equavilent to \ref{cons-5c}.
    Therefore, Equation~(\ref{equation-6}) has the same solution as Equation~(\ref{equation-5}). 
\end{proof}

Now we can solve Equation~(\ref{equation-5}) by instead solving the optimization problem in Equation~(\ref{equation-6}), where constraints in (\ref{cons-6b}) and (\ref{cons-6d}) are binary constraints and can be handled efficiently.
However, set $\mathbf{SR}(\mathbf{u}^L, \phi^F)$ is not necessarily the same for different $\mathbf{u}^L$, which makes directly verifying constraints~(\ref{cons-6c}) and 
(\ref{cons-6d}) difficult. This is because that one always needs to reconstruct $\mathbf{SR}(\mathbf{u}^L, \phi^F)$ when $\mathbf{u}^L$ changes.
To tackle this challenge, 
we take a conservative yet efficient approach by gradually extending the 
region of  $\mathbf{SR}(\mathbf{u}^L, \phi^F)$.
Specifically, for constraint~(\ref{cons-6d}), we can directly require that $z^L(x_0,\mathbf{u}^L,\mathbf{u}'^F)\geq z^F(x_0,\mathbf{u}^L,\mathbf{u}'^F)$ for all $\mathbf{u}'^F\in (U^F)^N$ since $\mathbf{u}'^F\in \mathbf{SR}(\mathbf{u}^L, \phi^F)$ can always be captured by $z^F(x_0,\mathbf{u}^L,\mathbf{u}'^F)=1$ according to the definitions of $z^F$ and $\mathbf{SR}(\mathbf{u}^L, \phi^F)$.
However, in general, constraints~(\ref{cons-6c}) are nonlinear. To handle these constraints, we further introduce a new variable $\rho^K\in \mathbb{R}$ such that
\[
\rho^K\geq 0 \Leftrightarrow [J_S(\mathbf{x}, \mathbf{u}^L,\mathbf{u}^F)\leq k]\vee[\mathbf{u}^F\not\in \mathbf{SR}(\mathbf{u}^L, \phi^F)].
\]
The above requirement can be captured by the following set of constraints:
\begin{subequations}
    \begin{align}
        & \rho^K\geq k-J_S(\mathbf{x}, \mathbf{u}^L,\mathbf{u}^F), \nonumber \\
        & \rho^K\geq -\rho^{\phi^F} (\mathbf{x},0), \nonumber \\
        & k\!-\!J_S(\mathbf{x}, \mathbf{u}^L,\mathbf{u}^F)-bM \leq\! \rho^K \!\leq k\!-\!J_S(\mathbf{x}, \mathbf{u}^L,\mathbf{u}^F)+bM, \nonumber \\
        & -\rho^{\phi^F}(\mathbf{x},0)-(1-b)M\leq \rho^K \leq -\rho^{\phi^F}(\mathbf{x},0)+(1-b)M, \nonumber
    \end{align}
\end{subequations}
where $b\in \{0,1\}$ is a binary variable, $M$ is a sufficiently large positive constant, $\rho^{\phi^F}(\mathbf{x},0)$ is the robustness value of $\phi^F$ along sequence $\mathbf{x}$ at time instant $0$ \cite{donze2010robust}. Particularly, for $\mathbf{x}=\xi_f(x_0,\mathbf{u}^L,\mathbf{u}^F)$ we have $\rho^{\phi^F}(\mathbf{x},0)\geq 0$ if and only if $\mathbf{u}^F\in \mathbf{SR}(\mathbf{u}^L, \phi^F)$.
Here, the first two constraints enforce that $\rho^K\geq\max[k-J_S(\mathbf{x}, \mathbf{u}^L,\mathbf{u}^F), -\rho^{\phi^F}(\mathbf{x},0)]$, and the last two constraints enforce that $\rho^K$ always equals to either $k-J_S(\mathbf{x}, \mathbf{u}^L,\mathbf{u}^F)$ or $ -\rho^{\phi^F}(\mathbf{x},0)$. 
Therefore, these constraints together enforce that
\begin{equation}
\rho^K=\max[k-J_S(\mathbf{x}, \mathbf{u}^L,\mathbf{u}^F), -\rho^{\phi^F}(\mathbf{x},0)].
\end{equation}
Analogously, we denote $\rho^K(k,x_0,\mathbf{u}^L,\mathbf{u}^F)$ as the value of $\rho^K$ upon $k$ and $\mathbf{x}=\xi_f(x_0,\mathbf{u}^L,\mathbf{u}^F)$.

We finally rewrite the optimization problem as follows
\begin{subequations}\label{equation-7}
\begin{align}
& \underset{k,\mathbf{u}^L}{\text{minimize}} & & k \\
& \text{subject to} & & z^F(x_0,\mathbf{u}^L,\mathbf{u}^F)=1, \label{cons-7b} \\
& & & \rho^K(k,x_0,\mathbf{u}^L,\mathbf{u}'^F)\geq0,\forall \mathbf{u}'^F\! \in\! (U^F)^N, \label{cons-7c} \\
& & & \!\!\!\!\!\!\!\!\!\!\!\!\!\!\!\!\!\!\!\!\!\!\!\!\!\!\!\!
z^L(x_0,\mathbf{u}^L,\mathbf{u}'^F)\geq z^F(x_0,\mathbf{u}^L,\mathbf{u}'^F),\forall\mathbf{u}'^F \!\in\! (U^F)^N, \label{cons-7d} \\
& & & x'_0=x_0, \nonumber \\
& & & x_{t+1} = f(x_t, u^L_t, u^F_t), \quad t = 0, 1, \dots, N-1, \nonumber \\
& & & x'_{t+1} = f(x'_t, u^L_t, u'^F_t), \quad t = 0, 1, \dots, N-1. \nonumber
\end{align}    
\end{subequations}

And we have the following direct result.
\begin{proposition}
The optimization problems Equation~(\ref{equation-6}) and Equation~(\ref{equation-7}) are equivalent.
\end{proposition}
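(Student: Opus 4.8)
The plan is to show that, for every fixed pair $(k,\mathbf{u}^L)$, the feasible sets of Equation~(\ref{equation-6}) and Equation~(\ref{equation-7}) coincide; since both programs minimize the same scalar $k$ under the same dynamics for $\mathbf{x}$ and $\mathbf{x}'$ and under the identical constraint~(\ref{cons-6b})/(\ref{cons-7b}) requiring $z^F(x_0,\mathbf{u}^L,\mathbf{u}^F)=1$, the optimal values must then be equal and any optimizer of one is an optimizer of the other. The only genuine differences are constraint~(\ref{cons-7c}) versus~(\ref{cons-6c}) and constraint~(\ref{cons-7d}) versus~(\ref{cons-6d}), where the quantification domain has been deliberately enlarged from $\mathbf{SR}(\mathbf{u}^L,\phi^F)$ to all of $(U^F)^N$. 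Hence it suffices to verify these two equivalences.

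First I would treat the robustness-based constraint. By the construction preceding the statement, the auxiliary variable satisfies
\[
\rho^K(k,x_0,\mathbf{u}^L,\mathbf{u}'^F)=\max\bigl[k-J_S(\mathbf{x}',\mathbf{u}^L,\mathbf{u}'^F),\,-\rho^{\phi^F}(\mathbf{x}',0)\bigr],
\]
so for a fixed follower input $\mathbf{u}'^F\in(U^F)^N$ the condition $\rho^K\geq 0$ holds exactly when $J_S\leq k$ or $\rho^{\phi^F}(\mathbf{x}',0)\leq 0$. Invoking the stated interface $\rho^{\phi^F}(\mathbf{x}',0)\geq 0\Leftrightarrow\mathbf{u}'^F\in\mathbf{SR}(\mathbf{u}^L,\phi^F)$, I would split into two cases: if $\mathbf{u}'^F\in\mathbf{SR}(\mathbf{u}^L,\phi^F)$, then $-\rho^{\phi^F}(\mathbf{x}',0)\leq 0$ and $\rho^K\geq 0$ reduces to $J_S\leq k$; if $\mathbf{u}'^F\notin\mathbf{SR}(\mathbf{u}^L,\phi^F)$, then $-\rho^{\phi^F}(\mathbf{x}',0)\geq 0$ makes $\rho^K\geq 0$ hold vacuously. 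Quantifying over all $\mathbf{u}'^F\in(U^F)^N$ thus yields precisely the requirement $J_S\leq k$ for every $\mathbf{u}'^F\in\mathbf{SR}(\mathbf{u}^L,\phi^F)$, i.e.\ constraint~(\ref{cons-6c}). This is the step that warrants the most care, since it is exactly the enlargement of the quantifier that must be shown to be harmless: the argument works because inputs outside $\mathbf{SR}(\mathbf{u}^L,\phi^F)$ contribute only trivially satisfied constraints.

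Next I would handle constraint~(\ref{cons-7d}) versus~(\ref{cons-6d}) by the same dichotomy, now using the binary encodings $z^L=1\Leftrightarrow\mathbf{x}'\models\phi^L$ and $z^F=1\Leftrightarrow\mathbf{x}'\models\phi^F$. For fixed $\mathbf{u}'^F$: if $\mathbf{u}'^F\in\mathbf{SR}(\mathbf{u}^L,\phi^F)$ then $z^F=1$, and since $z^L\in\{0,1\}$ the inequality $z^L\geq z^F$ forces $z^L=1$, i.e.\ $\mathbf{x}'\models\phi^L$; if $\mathbf{u}'^F\notin\mathbf{SR}(\mathbf{u}^L,\phi^F)$ then $z^F=0$ and $z^L\geq 0$ is automatic. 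Hence imposing $z^L\geq z^F$ over all $\mathbf{u}'^F\in(U^F)^N$ is equivalent to imposing $z^L=1$ over all $\mathbf{u}'^F\in\mathbf{SR}(\mathbf{u}^L,\phi^F)$, which is constraint~(\ref{cons-6d}). With the remaining constraints literally identical, the two feasible sets in $(k,\mathbf{u}^L)$ coincide and the optimal values are equal. The main obstacle I anticipate is the boundary behaviour of the robustness value at $\rho^{\phi^F}(\mathbf{x}',0)=0$, which I would dispatch by relying on the stated equivalence $\rho^{\phi^F}(\mathbf{x}',0)\geq 0\Leftrightarrow\mathbf{u}'^F\in\mathbf{SR}(\mathbf{u}^L,\phi^F)$ as the definitional bridge between the quantitative and Boolean semantics.
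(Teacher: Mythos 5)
Your proposal is correct and follows essentially the same route as the paper's own (much terser) proof: the paper likewise argues that (\ref{cons-7c}) is equivalent to (\ref{cons-6c}) via the definition of $\rho^K$ as a maximum, and that (\ref{cons-7d}) is equivalent to (\ref{cons-6d}) via the encoding of the implication $\phi^F\to\phi^L$, with the enlargement of the quantifier to $(U^F)^N$ being harmless because non-successful responses yield vacuous constraints. Your version simply spells out the case analysis in detail and, to your credit, explicitly flags the boundary case $\rho^{\phi^F}(\mathbf{x}',0)=0$ (where (\ref{cons-7c}) is vacuously satisfied yet $\mathbf{u}'^F\in\mathbf{SR}(\mathbf{u}^L,\phi^F)$), an edge case the paper silently glosses over.
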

\begin{proof}
    By introducing $\rho^K$, we can conclude that constraints~(\ref{cons-7c}) are equivalent to (\ref{cons-6c}). Moreover, according to the encoding of the formula $\phi^F\to \phi^L$, we have constraints~(\ref{cons-7d}) are equivalent to (\ref{cons-6d}).
    Therefore, Equation~(\ref{equation-7}) has the same solution as Equation~(\ref{equation-6}).
\end{proof}

Equation~(\ref{equation-7}) further reduces the need to reconstruct the successful response set $\mathbf{SR}(\mathbf{u}^L, \phi^F)$ compared to Equation~(\ref{equation-6}). However, constraints~(\ref{cons-7c}) and (\ref{cons-7d}) are defined on the whole action space $(U^F)^N$ of the follower, which can be prohibitively large or even infinite and computationally intractable to optimize globally.
In practice, rather than seeking a globally optimal solution, we typically prioritize finding an effective one for such a complex optimization problem. To achieve this efficiently, we adapt a counterexample-guided inductive synthesis scheme, as shown in Procedure~\ref{algorithm-synthesis-coop}. Specifically, in line~1, we randomly initialize a set of candidate inputs of the follower. Then we compute the optimal solution $\mathbf{u}^L$ and $k$ w.r.t. $\mathbf{U}^{cand}$ in line~3. 
Next in line~4, we find the counterexample that falsify the global constraints $z^L(x_0,\mathbf{u}^L,\mathbf{u}'^F)\geq z^F(x_0,\mathbf{u}^L,\mathbf{u}'^F)$ or $\rho^K(k,x_0,\mathbf{u}^L,\mathbf{u}'^F)\geq0$ when the above $\mathbf{u}^L$ is used. 
For simplicity, we directly write $\rho^K(k,x_0,\mathbf{u}^L,\mathbf{u}^F)$ and $\rho^{\phi^F\to\phi^L}(x_0,\mathbf{u}^L,\mathbf{u}^F)$ as $\rho^K$ and $\rho^{\phi^F\to\phi^L}$, respectively.
If it does not exist, we can return $\mathbf{u}^L$ as the result in line~6. Otherwise, we add the  counterexample to $\mathbf{U}^{cand}$ in line~8 and consider them in the next iteration.

\begin{procedure}
    \caption{Cooperative Synthesis()}
    \label{algorithm-synthesis-coop}
	\KwIn{system dynamic function $f$, cost function $J_S$, initial state $x_0$ and specifications $\phi^L$ and $\phi^F$} 
	\KwOut{$\mathbf{u}^L$}
	Let $\mathbf{U}^{cand}=\{\text{a set of randomly generated }\mathbf{u}^F\}$\\
	\While{True}
        { \text{ } \vspace{-19pt} 
	\begin{align}
            k, \mathbf{u}^L \leftarrow &\text{the solution of Equation~(\ref{equation-7})} \nonumber \\ 
             & \text{by changing } (U^F)^N \text{ for } \mathbf{U}^{cand} \nonumber 
        \end{align} \\
        \text{ } \vspace{-20pt}
        \begin{align}
            &\mathbf{u}^F \leftarrow \underset{\mathbf{u}^F}{\text{minimize}}\min[\rho^K, \rho^{\phi^F\to\phi^L}] \nonumber
        \end{align} \\
	\eIf{$\min[\rho^K,\rho^{\phi^F\to\phi^L}]\geq 0$}
	{\textbf{return} $\mathbf{u}^L$}
	{$\mathbf{U}^{cand}\leftarrow \mathbf{U}^{cand}\cup \{\mathbf{u}^F\}$} 
 }
\end{procedure}

\begin{theorem}\label{theo-1}
    Procedure~\ref{algorithm-synthesis-coop} is sound, in the sense that if it returns $\mathbf{u}^L\in (U^L)^N$, then constraints in (\ref{equation-7}) are all fulfilled.
\end{theorem}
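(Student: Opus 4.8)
The plan is to exploit the guard of the while-loop. The procedure returns $\mathbf{u}^L$ at line~6 \emph{only} when the test at line~5 succeeds, i.e.\ when $\min[\rho^K,\rho^{\phi^F\to\phi^L}]\geq 0$ for the follower input $\mathbf{u}^F$ produced at line~4. Since line~4 selects $\mathbf{u}^F$ as a \emph{global minimizer} of $\min[\rho^K,\rho^{\phi^F\to\phi^L}]$ over the whole action space $(U^F)^N$, a non-negative value at this minimizer certifies that the same quantity is non-negative for \emph{every} follower input. Thus the first step is to establish, at the moment of return,
\[
\min_{\mathbf{u}'^F\in(U^F)^N}\min\bigl[\rho^K(k,x_0,\mathbf{u}^L,\mathbf{u}'^F),\ \rho^{\phi^F\to\phi^L}(x_0,\mathbf{u}^L,\mathbf{u}'^F)\bigr]\geq 0 .
\]

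The second step is to split this joint minimum into its two components, using the elementary fact that $\min[a,b]\geq 0$ iff $a\geq 0$ and $b\geq 0$. Hence both arguments are individually non-negative for all $\mathbf{u}'^F\in(U^F)^N$. The condition $\rho^K(k,x_0,\mathbf{u}^L,\mathbf{u}'^F)\geq 0$ for every $\mathbf{u}'^F$ is precisely constraint~(\ref{cons-7c}). For the other component, I would invoke the robustness--Boolean correspondence together with the encoding of $\phi^F\to\phi^L$ (cf.\ the discussion preceding Equation~(\ref{equation-7})): $\rho^{\phi^F\to\phi^L}(x_0,\mathbf{u}^L,\mathbf{u}'^F)\geq 0$ holds iff $\mathbf{x}\models\phi^F\to\phi^L$, which is equivalent to $z^L(x_0,\mathbf{u}^L,\mathbf{u}'^F)\geq z^F(x_0,\mathbf{u}^L,\mathbf{u}'^F)$. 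Requiring this for all $\mathbf{u}'^F$ yields constraint~(\ref{cons-7d}).

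The third step handles the constraints that are \emph{not} universally quantified over $(U^F)^N$. Constraint~(\ref{cons-7b}), namely $z^F(x_0,\mathbf{u}^L,\mathbf{u}^F)=1$, involves only the single witness variable $\mathbf{u}^F$ and is therefore retained unchanged when line~3 restricts the universal quantifiers from $(U^F)^N$ to the finite set $\mathbf{U}^{cand}$; consequently any $\mathbf{u}^L$ returned by line~3 already satisfies~(\ref{cons-7b}), since the restriction only relaxes~(\ref{cons-7c})--(\ref{cons-7d}) and leaves~(\ref{cons-7b}) intact. The dynamical constraints $x'_0=x_0$, $x_{t+1}=f(x_t,u^L_t,u^F_t)$ and $x'_{t+1}=f(x'_t,u^L_t,u'^F_t)$ hold by the very definition of the state trajectories. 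Collecting the three steps shows that every constraint of Equation~(\ref{equation-7}) is fulfilled, which is the claimed soundness.

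The main obstacle is justifying the first step: it requires that line~4 returns a true global minimizer over the possibly infinite set $(U^F)^N$, so that a non-negative guard certifies the universally quantified constraints over the \emph{entire} action space rather than over $\mathbf{U}^{cand}$ alone. I would make explicit that soundness---in contrast to completeness or termination, which are \emph{not} asserted here---rests solely on this falsification step being exact: whenever the procedure does return, the minimization has ruled out every violating follower input, so no gap remains between the candidate-restricted solve of line~3 and the full problem~(\ref{equation-7}).
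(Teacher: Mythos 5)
Your proof is correct and takes essentially the same route as the paper's: constraint~(\ref{cons-7b}) holds because every candidate in $\mathbf{U}^{cand}$ lies in $(U^F)^N$, and constraints~(\ref{cons-7c})--(\ref{cons-7d}) follow because a non-negative value of $\min[\rho^K,\rho^{\phi^F\to\phi^L}]$ at the minimizer computed in line~4 excludes any violating follower input over the whole action space (the paper phrases this last step as a proof by contradiction rather than directly). Your explicit observation that soundness rests on line~4 returning a true global minimizer over $(U^F)^N$ is a point the paper's proof leaves implicit.
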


\begin{proof}
    We first note that for any $\mathbf{u}'^F\in \mathbf{U}^{cand}$, we have $\mathbf{u}'^F\in (U^F)^N$. Therefore, if $\mathbf{u}^L$ is returned, then we can first conclude that there exists a $\mathbf{u}^F\in (U^F)^N$, such that $z^F=1$, which immediately leads that constraint (\ref{cons-7b}) is satisfied.
    To prove constraints (\ref{cons-7c}) and (\ref{cons-7d}), we suppose that there still exists $\mathbf{u}^F\in (U^F)^N$ such that $(\rho^K<0)\vee(z^L<z^F)$ holds true
    after $\mathbf{u}^L$ is returned.
    Then we can conclude that $(\rho^K<0)\vee \rho^{\phi^F\to\phi^L}<0$. And thus
    $\min[\rho^K,\rho^{\phi^F\to\phi^L}]<0$, which is a contradiction to the condition in line~5. Therefore, both constraints (\ref{cons-7c}) and (\ref{cons-7d}) are satisfied.
\end{proof}

Note the optimality of the solution returned by Procedure~\ref{algorithm-synthesis-coop} is highly dependent on the initialization of $\mathbf{U}^{cand}$. To see this, let's consider an extreme example, assume that $\mathbf{U}^{cand}$ is initialized as a single instance of $\mathbf{u}^F$, with only a single $\mathbf{u}^L$ as a control inputs pair that meets the condition $z^F(x_0,\mathbf{u}^L,\mathbf{u}^F)=1$, as well as satisfies (\ref{cons-7c}) and (\ref{cons-7d}). In such a situation, the procedure will stop and return this $\mathbf{u}^L$ after one iteration, as no counterexamples can be found. Nonetheless, we cannot guarantee the optimality of this solution. 
In practice, the more likely optimal solution can be constructed by enlarging the initial $\mathbf{U}^{cand}$ or running this procedure repeatedly.

\subsection{Synthesis of Antagonistic Solution}

In the antagonistic case, the leader must enforce the unsatisfiability of the follower's task. By assuming that the follower will become non-interfering when its own task is not satisfiable, the leader can set $\mathbf{u}^F$ to $\mathbf{0}$ while maximizing its objective function. This leads to the following optimization problem.
\begin{subequations}\label{equation-8}
\begin{align}
& \underset{\mathbf{u}^L}{\text{minimize}} & &  J_S(\mathbf{x}, \mathbf{u}^L, \mathbf{0}) \\
& \text{subject to} & & \mathbf{SR}(\mathbf{u}^L, \phi^F) = \emptyset, \label{cons-8b} \\
& & & \mathbf{x} \models \phi^L, \label{cons-8c} \\
& & & x_{t+1} = f(x_t, u^L_t, 0^{m_F}), \quad t = 0, 1, \dots, N-1.\nonumber
\end{align}   
\end{subequations}

To solve this, note that constraints~(\ref{cons-8b}) and (\ref{cons-8c}) can be directly encoded by $z^F(x_0,\mathbf{u}^L,\mathbf{u}'^F)=0,  \forall  \mathbf{u}'^F \in (U^F)^N$ and $z^L(x_0,\mathbf{u}^L,\mathbf{0})=1$.
Therefore, we can still use the basic outline of Procedure~\ref{algorithm-synthesis-coop} to compute the leader's control input sequence $\mathbf{u}^L$ for the antagonistic case by incorporating the following modifications:
\begin{itemize}
    \item 
    In line~3,  Equation~(\ref{equation-7}) is changed to  Equation~(\ref{equation-8}) with encoded constraints~(\ref{cons-8b}) and (\ref{cons-8c}); and
    \item 
    In line~4 and line~5, the condition for determining counterexamples is changed to $\max_{\mathbf{u}^F}\rho^{\phi^F}< 0$.
\end{itemize}

The modified procedure still inherits the  counterexample-guided scheme. However, the key distinction lies in the fact that the initialized set $\mathbf{U}^{cand}$ does not affect the result returned by the procedure since the objective function is always optimized under $\mathbf{u}^F=\mathbf{0}$.
As a result, given that we can get the optimal solution for the problem in line~3, the modified procedure is not only sound but also optimal for the case of the antagonistic solution. 
\begin{theorem}
    In the case of antagonistic, if we can find the optimal solution in line~3, then we have that Procedure~\ref{algorithm-synthesis-coop} is both sound and optimal.
\end{theorem}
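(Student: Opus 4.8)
The plan is to prove the two assertions, soundness and optimality, separately, reusing the structure of Theorem~\ref{theo-1} for the former and adding a relaxation argument for the latter. For soundness I would argue exactly as in Theorem~\ref{theo-1}: whenever the modified procedure returns $\mathbf{u}^L$, the termination test $\max_{\mathbf{u}^F}\rho^{\phi^F}<0$ guarantees that $\rho^{\phi^F}(\xi_f(x_0,\mathbf{u}^L,\mathbf{u}'^F),0)<0$ for every $\mathbf{u}'^F\in(U^F)^N$, which by the characterization $\rho^{\phi^F}(\mathbf{x},0)\ge 0 \Leftrightarrow \mathbf{u}^F\in\mathbf{SR}(\mathbf{u}^L,\phi^F)$ is precisely $\mathbf{SR}(\mathbf{u}^L,\phi^F)=\emptyset$, i.e.\ constraint~(\ref{cons-8b}). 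Constraint~(\ref{cons-8c}), encoded as $z^L(x_0,\mathbf{u}^L,\mathbf{0})=1$, together with the dynamics constraint is enforced directly inside the line-3 problem, so the returned $\mathbf{u}^L$ is feasible for Equation~(\ref{equation-8}).

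For optimality the key is to regard the line-3 problem as a relaxation of the full problem~(\ref{equation-8}). First I would observe that solving line 3 over a candidate set $\mathbf{U}^{cand}\subseteq(U^F)^N$ replaces the universally quantified constraint $z^F(x_0,\mathbf{u}^L,\mathbf{u}'^F)=0$ for all $\mathbf{u}'^F\in(U^F)^N$ by its restriction to $\mathbf{U}^{cand}$, while leaving the objective $J_S(\mathbf{x},\mathbf{u}^L,\mathbf{0})$ and the constraint $z^L(x_0,\mathbf{u}^L,\mathbf{0})=1$ unchanged. Since the feasible region of the line-3 problem therefore contains that of Equation~(\ref{equation-8}), its optimal value $v^{cand}$ is a lower bound on the optimal value $v^\star$ of Equation~(\ref{equation-8}); that is, $v^{cand}\le v^\star$.

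Next I would use soundness to close the gap. By the soundness step, the returned $\mathbf{u}^L$ is feasible for Equation~(\ref{equation-8}), so its objective is at least $v^\star$. Crucially, because the follower input is frozen at $\mathbf{u}^F=\mathbf{0}$, the objective $J_S(\mathbf{x},\mathbf{u}^L,\mathbf{0})$ depends only on $\mathbf{u}^L$ and not on $\mathbf{U}^{cand}$; hence, under the hypothesis that line 3 is solved to global optimality, the value attained by the returned $\mathbf{u}^L$ is exactly the line-3 optimum $v^{cand}$. Combining $v^{cand}\le v^\star$ with $v^{cand}=J_S(\mathbf{x},\mathbf{u}^L,\mathbf{0})\ge v^\star$ gives $v^{cand}=v^\star$, so the returned $\mathbf{u}^L$ attains the optimum of Equation~(\ref{equation-8}) and is therefore optimal.

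The main obstacle is the optimality step, and in particular making explicit why it succeeds here but fails in the cooperative case. The argument hinges on the objective being invariant under enlargement of $\mathbf{U}^{cand}$: with $\mathbf{u}^F=\mathbf{0}$ fixed, the cost of the returned leader input is identical whether evaluated in the relaxed line-3 problem or in the true problem~(\ref{equation-8}), which is what allows the relaxation lower bound to coincide with an attained feasible value. In the cooperative case this invariance breaks down, since there the objective is the inner maximization $\max_{\mathbf{u}^F\in\mathbf{SR}(\mathbf{u}^L,\phi^F)}J_S$, and restricting the maximization to $\mathbf{U}^{cand}$ can strictly underestimate it; I would therefore be careful to state that no such underestimation can occur once the follower's contribution to the cost is removed. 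Finally, I would note that, exactly as in Theorem~\ref{theo-1}, the claim is understood conditionally on the procedure returning and on the assumption that line 3 is solved optimally.
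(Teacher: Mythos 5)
Your proposal is correct and follows essentially the same route as the paper: soundness is inherited from Theorem~\ref{theo-1}, and optimality rests on the observation that the line-3 problem restricted to $\mathbf{U}^{cand}$ is a relaxation of Equation~(\ref{equation-8}) whose objective does not depend on $\mathbf{U}^{cand}$. The paper phrases the optimality step as a contradiction (a strictly better feasible $\mathbf{u}'^L$ would also be feasible for the relaxed line-3 problem, contradicting that the solver returned $\mathbf{u}^L$), whereas your direct two-inequality value comparison is a cleaner and slightly more rigorous rendering of the same idea.
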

\begin{proof}
    The proof of soundness can be processed the same as \ref{theo-1}. As for completeness, we want to prove that the control input $\mathbf{u}^L$ returned by Procedure~\ref{algorithm-synthesis-coop} is indeed the optimal solution in the antagonistic case.
    To prove this, we assume that after $\mathbf{u}^L$ is returned, there still exists a control input $\mathbf{u}'^{L}$ satisfying constraints (\ref{cons-8b}) and (\ref{cons-8c}) such that $J_S(x_0, \mathbf{u}'^{L},\mathbf{0})<J_S(x_0, \mathbf{u}^L,\mathbf{0})$. From this assumption, we know that for any $\mathbf{u}^F$ in the current candidate set $\mathbf{U}^{cand}$, we also have $z^F=0$ under the control input $\mathbf{u}'^{L}$. Therefore, we have $\mathbf{u}^L=\mathbf{u}^{L'}$ in line~3, which is a contradiction to the fact that $\mathbf{u}^L$ is returned.
\end{proof}

\section{Case Studies}\label{sec-case}
In this section, we present simulation results for two case studies. Our approach are implemented in \textsf{Python 3}, and we use \textsf{Gurobi} to solve the optimization problems. All simulations were conducted on a laptop equipped with an Apple M2 CPU and 8 GB of RAM. All source codes are available at 
\href{https://github.com/stack-stl/Stackelberg-STL}{https://github.com/stack-stl/Stackelberg-STL}.

\subsection{Double Integrator with Joint Inputs}
{\bf System Model: }
In this case study, we consider a two-dimensional double integrator system operating in a planar environment, where the acceleration in each dimension is jointly determined by both the leader and the follower. 
The system dynamic,  with
a sampling period of 0.5s, is given by
\begin{equation}
    x_{k+1}=
    A x_{k}
    +
    B (u^L_k+u^F_k), \nonumber
\end{equation}
where $A=       \left[
                    \begin{array}{cccc}
                         1 & 0.5 & 0 & 0 \\
                         0 & 1 & 0 & 0 \\
                         0 & 0 & 1 & 0.5 \\
                         0 & 0 & 0 & 1
                    \end{array}
                \right]$,
    $B =        \left[
                    \begin{array}{cc}
                         0.125 & 0  \\
                         0.5 & 0  \\
                         0 & 0.125  \\
                         0 & 0.5 
                    \end{array}
                \right]$,
and state $x_k=[x, v_x, y, v_y]^T$ denotes $x$-position, $x$-velocity, $y$-position and $y$-velocity, the control input of the leader $u^L_k =[u^L_x, u^L_y]^T$  and the control input of the follower $u^F_k =[u^F_x, u^F_y]^T$ effect $x$-acceleration and $y$-acceleration together by their sum. The physical constraints are $x\in X = [0,10] \times [-3,3] \times [0,10] \times [-3,3]$, $u^L\in U^L = [-3,3]^2$ and $u^F\in U^F = [-0.01, 0.01]^2$.

{\bf Planning Objectives: }We assume that the initial state of the robot is $[2, 0, 6, 0]$, which is shown as the red point in Figure~\ref{fig:single-coop}. 
The control objective of the leader is to visit region A1 at least once within time interval 1 to 10 (from 0.5s to 5s), always stay at region A2 time interval  14 to 16 (from 7s to 8s), and finally reach region A3 at least once time interval 20 to 25 (from 10s to 12.5s). 
This objective can be specified by the STL formula 
\[
\phi^L=\mathbf{F}_{[1,10]}A_1 \land \mathbf{G}_{[14,16]}A_2 \land \mathbf{F}_{[20,25]}A_3,\]
where 
$A_1=(x\in [8,10])\land (y\in [8,10])$, $A_2=(x\in [1,4])\land (y\in [1,4])$ and $A_3= (x\in [8,10])\land (y\in [0,2])$. 
Meanwhile, the control objective of the follower is to visit region B1 at least once between instants 4 to 9 (from 2s to 4.5s) and always stay at region B2 within time interval 12 to 13 (from 6s to 6.5s), which is specified by $\phi^F=\mathbf{F}_{[4,9]}B_1 \land \mathbf{G}_{[12,13]}B_2$, where $B_1=(x\in [1,3])\land (y\in [6.5,8])$ and $B_2= (x\in [2,5])\land (y\in [2,5])$. The cost function we use is $J_S(\mathbf{x},\mathbf{u}^L,\mathbf{u}^F)= 1.6\times 10^{-7}\times\Sigma_{i=0}^{T_{\phi^L}}||u^L_i||_2^2-\rho^{\phi^L}(\xi_f(x_0,\mathbf{u}^L,\mathbf{u}^F),0)$. Note that we use the term $1.6\times 10^{-7}$ to match the order of magnitude of the cost of $\mathbf{u}^L$ and the robustness value of $\phi^L$.

\begin{figure}[t]
    \centering
    \includegraphics[width=0.4\textwidth]{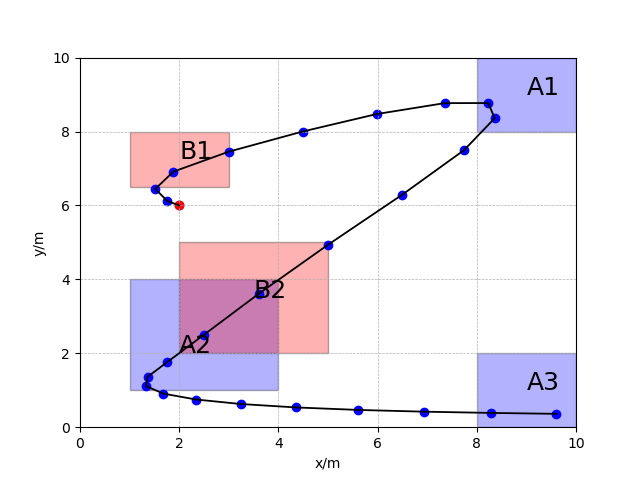}
    \caption{The trajectory for the single robot: cooperative case.}
    \label{fig:single-coop}
\end{figure}

\begin{figure}[t]
    \centering
    \includegraphics[width=0.4\textwidth]{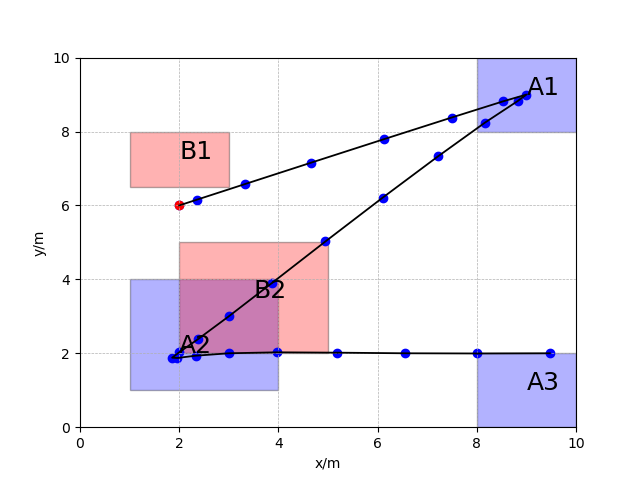}
    \caption{The trajectory for the single robot: antagonistic case.}
    \label{fig:single-anta}
\end{figure}

{\bf Simulation Results and Analysis: }
Note that, in this case study, the tasks $\phi^L$ and $\phi^F$ are not inherently conflicting. However, since the control input of the follower is restricted to the set $[-0.01, 0.01]^2$, the follower can only have a limited influence on the evolution of the system. As a result, whether the task $\phi^F$ is satisfied is primarily determined by the leader's actions. Specifically, the leader decides whether to adopt a cooperative solution or an antagonistic solution. This highlights the leader's dominant role in shaping the system's behavior in this scenario.

The resulting trajectories for the cooperative case and the antagonistic case are illustrated in Figure~\ref{fig:single-coop} and Figure~\ref{fig:single-anta}, respectively. The computation time and the associated cost for each case are summarized in Table~\ref{tab:1}. 
Note that the trajectories depicted in the figures are projections onto the first and third dimensions (representing the $x$-position and $y$-position, respectively). However, the set membership and satisfaction of the STL specifications are still evaluated in the complete 4-dimensional state space, which includes both position and velocity components.

In the cooperative case, the leader must first move to region B1 to collaborate with the follower. Then, it needs to adjust the speed to ensure the entire system remains in region B2 during the time interval $[12, 13]$. As a result, the cooperative approach leads to (i) higher control cost and (ii) lower robustness of the formula $\phi^L$ compared to the antagonistic case. In the antagonistic case, since the follower cannot achieve $\phi^F$, it chooses to be non-interfering, and the computation time for this case is much shorter than for the cooperative case.

\begin{table}[t]
    \caption{}
    \centering
    \begin{tabular}{c|c|c}
    \hline
     Procedure & Computation Time & Cost\\    
    \hline
     Cooperative & 7.4825s & -0.3613\\
    \hline
     Antagonistic & 0.6691s & -0.9999\\
    \hline
    
    \end{tabular}
    \vspace{10pt}
    \label{tab:1}
\end{table}

\subsection{Multi-Agent Planning Problem}\label{sec-multi}
{\bf System Model: }
In this case study, we consider a team of three identical robots  in a common workspace. 
The dynamic of each robot is modeled by a single integrator on a planar workspace: 
\begin{equation}
    x^i_{t+1}= A x^i_t + B u^i_t \nonumber
\end{equation}
where $A=       \left[
                    \begin{array}{cccc}
                         1 & 0 \\
                         0 & 1 
                    \end{array}
                \right]$,
    $B =        \left[
                    \begin{array}{cc}
                         1 & 0  \\
                         0 & 1 
                    \end{array}
                \right],i=1, 2, 3$,  
and state $x_t^i=[x^i, y^i]^T$ denotes $x$-positions and $y$-positions for agent $i$.  
The physical constraints are $x^i\in X^i=[0,10]^2$, $u^i\in U^i=[-1,1]^2$ for all $i= 1,2,3$. 
All  robots start from the same initial state $(2,6)$.

{\bf Planning Objectives: } 
Our design objective is to ensure that all three agents can achieve their respective tasks. However, we assume that the designer can only directly control the behavior of Agent 1, while Agents 2 and 3 are rational agents that jointly pursue their own objectives. Consequently, the behavior of Agents 2 and 3 is implicitly influenced by finding a suitable cooperative solution for Agent 1, which aligns their actions with the overall system goals.  

The control objective of Agent 1 is given by the following STL formula 
\[ 
\phi^L=\mathbf{F}_{[1,10]}A_1 \land \mathbf{G}_{[14,16]}A_2 \land \mathbf{F}_{[20,25]}A_3,
\]
where $A_1=(x^1\in [0,2])\land (y^1\in [8,10])$, $A_2=(x^1\in [7,10])\land (y^1\in [7,10])$, and $A_3= (x^1\in [8,10])\land (y^1\in [0,2])$.
The control objectives of Agents 2 and 3 are as follows. First, during the entire operation horizon $[0,25]$, both Agents 2 and 3 are required to follow the trajectory of Agent 1 by maintaining a distance of less than one unit. This requirement can be formally specified using the following STL formulae:
\[
\phi_D^i=\mathbf{G}_{[0,25]} (x^1-x^i)^2+(y^1-y^i)^2\leq 1, i=2,3.
\] 
In addition, each follower has its own task. Specifically, Agent 2 is required to visit region $B_1$ at least once within the time interval $[4,9]$ and remain within region $B_2$ during the time interval $[15,17]$. These requirements can be formally specified using the following STL formula: 
$\phi^2=\mathbf{F}_{[4,9]}B_1 \land \mathbf{G}_{[15,17]}B_2$, where $B_1=(x^2\in [1,3])\land (y^2\in [6.5,8])$ and $B_2= (x^2\in [8,10])\land (y^2\in [6,9]$). 
For Agent 3, it needs to  visit region $C_1$ at least once within the time interval $[1,25]$, which is specified by STL formula $\phi^3=\mathbf{F}_{[1,25]}C_1$, where $C_1=(x^3\in [3,6])\land (y^3\in [3,6])$. 
In summary, the overall STL formula for the followers is given by  
\[
\phi^F=\phi_D^2\land \phi_D^3 \land \phi^2 \land \phi^3.
\]
In this experiment, we inherit the cost function used in the above experiment $J_S(\mathbf{x},\mathbf{u}^1,\mathbf{u}^2, \mathbf{u}^3)= 1.6\times 10^{-7}\times\Sigma_{i=0}^{T_{\phi^L}}||u^L_i||_2^2-\rho^{\phi^L}(\xi_f(x_0,\mathbf{u}^1,\mathbf{u}^2, \mathbf{u}^3),0)$. 
We note that here since we consider a single integrator system, 
the robustness of $\phi^L$ generally dominates the value of $J_S$.



\begin{figure}[t]
    \centering
    \includegraphics[width=0.4\textwidth]{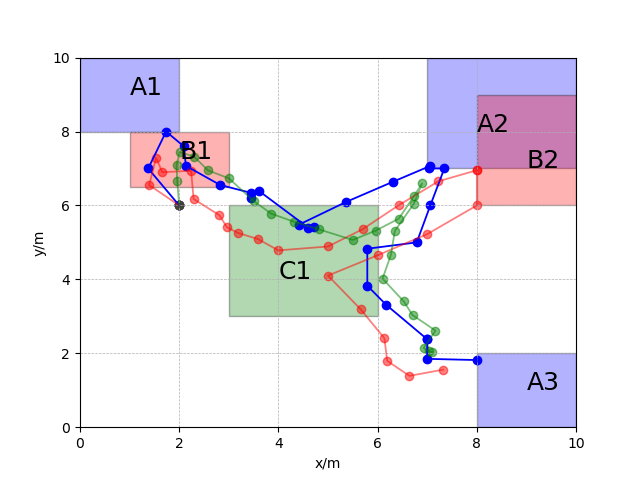}
    \caption{Trajectories for three robots: cooperative case.}
    \label{fig:multi-coop}
\end{figure}


{\bf Simulation Results and Analysis: }
The simulation result of the cooperative solution is shown in Figure~\ref{fig:multi-coop}, where  blue, red and green  lines denote the trajectories of Agents 1, 2 and 3, respectively. 
The computation time for finding the solution is $11.4493$s and the value of the optimal cost is $2.7439e-6$.

Note that, while tasks \(\phi^L\) and \(\phi^F\) do not directly conflict, the mobility of Agents 2 and 3 is constrained by the trajectory of Agent 1, as they need to closely follow the leader's path. Therefore, when synthesizing the trajectory for Agent 1, it cannot pursue its own STL task directly, as doing so would prevent the followers from achieving their reachability tasks while maintaining a close distance. 
For example, to collaborate with Agent 2, the leader (Agent 1) must first stay close to region \(B_1\) at the start of the execution, and then remain close to region \(B_2\) during the time interval \([15, 17]\). While these constraints are not necessary for Agent 1 itself and would decrease the robustness of its task \(\phi^L\), such sacrifices are essential to ensure the feasibility of the tasks for Agents 2 and 3.


\section{Conclusion}\label{sec-conclusion}
In this paper, we present a Stackelberg game-theoretical framework for signal temporal logic  control synthesis in the presence of uncontrollable agents. Compared to existing robust control approaches, our framework better captures the rationality of uncontrollable agents, providing a more flexible and less conservative solution for the system controller. This work represents the first step toward STL control synthesis within the Stackelberg game framework. However, the current results have certain limitations that we aim to address in future research. First, although we consider a leader-follower setting, each agent still operates in an open-loop control fashion, i.e.,  it commits to a fixed plan at the beginning and execute it without adaptation. In the future, we plan to extend this to a reactive control setting, where agents can adjust their decisions on-the-fly based on real-time feedback. Additionally, our current framework focuses on the Boolean satisfaction of STL formulae. We intend to expand our results to a quantitative setting, where, for example, the follower maximizes the robustness value of its STL formula rather than abandoning its task when qualitative satisfaction is not achievable. 

\bibliographystyle{ieeetr}
\bibliography{mybib}

\end{document}